\documentclass[aps,prx,onecolumn,superscriptaddress,notitlepage]{revtex4-1}

\usepackage[T1]{fontenc}
\usepackage[utf8]{inputenc}
\usepackage{lmodern}
\usepackage{amsmath}
\usepackage{amssymb}
\usepackage{amsthm}
\usepackage{amsbsy}
\usepackage{mathrsfs}
\usepackage{graphicx}
\usepackage{enumerate}
\usepackage{xcolor}   
\usepackage{array}
\usepackage{enumitem}
\usepackage{tikz}
\usepackage{accents}
\usepackage{dsfont}
\usepackage{hyperref}


\newcommand{\accentfuturearrowc}[1]{%
\begin{tikzpicture}[#1]%
\fill (2mm,0) -- (1.2mm,0.3mm) -- (1.2mm,-0.3mm);
\draw[line width = 0.2mm] (-0.6mm,0mm) -- (1.2mm,0mm);
\draw (2mm,0) -- (1.2mm,0.3mm) -- (1.2mm,-0.3mm) -- cycle;
\end{tikzpicture}%
}
\newcommand{\tempfutc}{\accentfuturearrowc{}}

\newcommand{\futp}[1]{\accentset{\tempfutc}{#1}}



\newcommand{\temppast}{\scalebox{-1}[1]{\accentfuturearrowc{}}}
\newcommand{\past}[1]{\accentset{\temppast}{#1}}



\newcommand{\accentbotharrowc}[1]{%
\begin{tikzpicture}[#1]%
\fill (-0.6mm,0) -- (0.2mm,0.3mm) -- (0.2mm,-0.3mm);
\fill (2mm,0) -- (1.2mm,0.3mm) -- (1.2mm,-0.3mm);
\draw[line width = 0.2mm] (-0.1mm,0mm) -- (1.2mm,0mm);
\draw (2mm,0) -- (1.2mm,0.3mm) -- (1.2mm,-0.3mm) -- cycle;
\draw (-0.6mm,0) -- (0.2mm,0.3mm) -- (0.2mm,-0.3mm) -- cycle;
\end{tikzpicture}%
}

\newcommand{\tempbothc}{\accentbotharrowc{}}

\newcommand{\bothp}[1]{\accentset{\tempbothc}{#1}}



\newcommand{\ket}[1]{\ensuremath{|{#1}\rangle}}

\newcommand{\proj}[2]{\ensuremath{|{#1}\rangle\langle{#2}|}}

\newcommand{\Rc}[1]{\futp{c}_{#1}}
\newcommand{\Lc}[1]{\past{c}_{#1}}
\newcommand{\LRc}{\bothp{c}}

\theoremstyle{plain}
\newtheorem{thm}{Theorem}

\usepackage{mfirstuc}

\begin{document}

\title{Allowing Wigner's friend to sequentially measure incompatible observables}
\author{An\'{i}bal Utreras-Alarc\'{o}n}
\affiliation{Centre for Quantum Computation and Communication Technology (Australian Research Council Centre of Excellence), Centre for Quantum Dynamics, Griffith University, Brisbane, QLD 4111, Australia.}
\author{Eric G. Cavalcanti}
\affiliation{Centre for Quantum Dynamics, Griffith University, Gold Coast, QLD 4222, Australia.}
\author{Howard M. Wiseman}
\affiliation{Centre for Quantum Computation and Communication Technology (Australian Research Council Centre of Excellence), Centre for Quantum Dynamics, Griffith University, Brisbane, QLD 4111, Australia.}


\begin{abstract}
The Wigner's friend thought experiment has gained a resurgence of interest in recent years thanks to no-go theorems that extend it to Bell-like scenarios. One of these, by us and co-workers, showcased the contradiction that arises between quantum theory and a set of assumptions, weaker than those in Bell’s theorem, which we named ``local friendliness''. Using these assumptions it is possible to arrive at a set of inequalities for a given scenario, and, in general, some of these inequalities will be harder to violate than the Bell inequalities for the same scenario. A crucial feature of the extended Wigner’s friend scenario in our aforementioned work was the ability of a superobserver to reverse the unitary evolution that gives rise to their friend's measurement. Here, we present a new scenario where the superobserver can interact with the friend repeatedly in a single experimental instance, either by asking them directly for their result, thus ending that instance, or by reversing their measurement and instructing them to perform a new one. We show that, in these scenarios, the local friendliness inequalities will always be the same as Bell inequalities.
\end{abstract}

\maketitle

\section{Introduction}
\label{ch:introduction}

The Wigner's friend thought experiment highlights the discrepancy between the two manners in which a system evolves, according to quantum theory: either in a deterministic, reversible fashion for closed systems, or following the typically indeterministic and irreversible state-update rule, after a measurement. In Wigner's original scenario \cite{Wigner61}, two observers are considered: one, whom we will refer to as the ``friend'', measures a quantum system, describing the resulting state acording to the state-update rule. The other, known as the ``superoberserver'', is outside the closed laboratory the friend performs their measurement in, and can describe the evolution of the contents of this laboratory using a unitary operator. This includes the measurement apparatus interacting with the quantum system, the friend recording the observed outcome, etc. The superobserver thus describes the friend as being in a superposition of observing all results.

In recent years, generalizations of the Wigner's friend experiment, incorporating elements of a Bell-like scenario, have been proposed in order to harden the apparent discrepancy above into novel no-go theorems. The first of these was due to Brukner \cite{Brukner18}. He considered an assumption he called ``observer-independent facts'', along with the previously established assumptions such as locality and freedom of choice, although not predetermination \cite{Brukner18}. Other works have further developed the ideas proposed in \cite{Brukner18}, with analyses considering different assumptions and/or scenarios \cite{Frauchiger18,Guerin21,Wiseman22,Xu23}.

Most relevant to this article is \cite{Bong20}, where we, along with other authors, coined the term ``local friendliness'' (LF) to describe a set of assumptions weaker than those considered by Brukner in \cite{Brukner18}. We then proved, using an extended Wigner's friend scenario similar to the one presented in \cite{Brukner18}, that LF is incompatible with quantum theory. There was also a difference in our presentation of the scenario in \cite{Bong20} and that originally employed by Brukner: the sole ``superpower'' we give to the superobserver is the ability to reverse the measurement performed by the friend, along with his memory and any other record of it. The result of \cite{Bong20} is akin to Bell's theorem \cite{Bell64}, albeit stronger due to no longer considering the assumption of predetermination. In \cite{Bong20} we also showed that the LF inequalities (those whose violation would serve as a proof for the LF no-go theorem) are, in general, different from the previously known Bell inequalities.

The above work, especially the LF no-go theorem, raises interesting new questions and directions. Is the gap between Bell inequality violation and LF inequality violation a fundamental property of the LF assumptions themselves, or is it a limitation of the scenarios to which they are being applied? In support of the latter possibility, it is worth noting the following. In \cite{Bong20} we observed that, for a scenario with two dichotomic measurements per party, the LF inequalities reduce to the Bell inequalities for that scenario. The reversal of the friend's state is another interesting feature. What others scenarios can be constructed that include such reversal, and what can they be used for? The present work addresses both of these questions.

Here, we consider an extended Wigner's friend scenario where a superobserver can, at multiple points in time, choose to open the friend's laboratory, thus ending the experiment at that point, or to reverse the evolution of the friend and instruct them to perform a new measurement. After the final such time, if the superobserver has not yet asked the friend for his result, she may make her own measurement, in a fixed basis. We refer to this scenario as a \emph{sequential} extendend Wigner's friend scenario. Under the LF assumptions, the correlations that arise in such scenario are indeed constrained solely by Bell inequalities, independently of the number of measurements under consideration, for a suitable sequence of reversals and instructions. 

This paper is structured as follows. In Section \ref{ch:definitions} we define some basic terms and mathematical concepts necessary for what follows. Section \ref{ch:background} briefly summarizes the Wigner's friend experiment, Bell's theorem, Brukner's theorem and the LF theorem. Section \ref{ch:sewfs} describes the sequential extended Wigner's friend scenario, setting up the paper's main result, which is presented and proven in Section \ref{ch:theorem}. Finally, in Section \ref{ch:discussion} we discuss our results.

\section{Definitions}
\label{ch:definitions}

\subsection{Basic postulates and concepts}
Through the entirety this paper, we will make use of the following concepts:
\begin{itemize}
    \item \textbf{Absoluteness of Observed Events (AOE):} A space-time variable observed by any observer takes an absolute singular value, and is not ``relative'' to anything or anyone.
    
    \item \textbf{Local agency:} The only relevant events correlated with a free choice are in its future light cone.
    
    \item \textbf{Predetermination:} There exists a foliation $\mathfrak{F}$ such that any observable space-time variable $A$ is determined by a sufficient specification of space-time variables on any space-like hypersurface $\mathfrak{S}\in\mathfrak{F}$ prior to $A$, possibly in conjunction with free choices subsequent to $\mathfrak{S}$.
\end{itemize}

We define \textbf{local determinism} as the conjunction of AOE, local agency and predetermination. On the other hand, \textbf{local friendliness} is defined as the conjunction of only AOE and local agency.

We will also make use of the following, empirically well supported, principle:

\begin{itemize}
    \item \textbf{No-signalling principle:} The empirical probabilities associated to any set of observable space-time variables are unchanged by conditioning on any free choice space-like separated from every member of the set.
\end{itemize}

\subsection{Behaviours and polytopes of relevance}

Let us consider two distant experimenters, Alice and Bob, who share a bipartite physical system and perform a measurement on their respective subsystems.
Alice makes a choice between $M_A$ measurement settings, denoted by $x$, with $N_A$ possible outcomes, labelled by $a$. Likewise, Bob's input and output are represented by $y$ and $b$ from $M_B$ and $N_B$ possible values, respectively. Let $\mathcal{X}$ be the set of all possible inputs $x$, and $\mathcal{Y}$ the set of all possible $y$. We define $\mathcal{A}$ as the set of all possible outputs for Alice's measurements, and $\mathcal{B}$ as the set of outputs for Bob's measurements. Then, we can define a \emph{public scenario} as the tetrad $\mathcal{S = (A,B,X,Y)}$.

Given the public scenario $\mathcal{S}$, we can define the probabilities $p(ab|xy)$ for all inputs and outputs. We define the \emph{behaviour}, or \emph{correlations}, of the system in this scenario as $\bar{p}=(p(ab|xy):(a,b,x,y)\in A \times B \times \mathcal{X} \times \mathcal{Y})$, a point in $\mathbb{R}^{M_A M_B N_A N_B}$ with each coordinate corresponding to a probability for a particular pairing between joint inputs and outputs. Naturally, we require behaviours to satisfy normalization and positivity constraints:
\begin{equation}
\label{eq:normalization}
    \sum_{a,b} p(ab|xy) = 1 \;\;\; \forall x,y,
\end{equation}
\begin{equation}
\label{eq:postivity}
    p(ab|xy) \ge 0 \;\;\; \forall a,b,x,y.
\end{equation}

For a public scenario $\mathcal{S}$, we denote the set of all behaviours that satisfy the 
no-signalling principle as $\mathbb{NS}(\mathcal{S})$. For all public scenarios, $\mathbb{NS}(\mathcal{S})$ will be a \emph{convex polytope}, that is, a bounded convex set with flat sides. Behaviours in $\mathbb{NS}(\mathcal{S})$ have to satisfy the following conditions:
\begin{equation}
    p(a|xy) = p(a|x) \;\;\; \forall a,b,x,y,
\end{equation}
\begin{equation}
    p(b|xy) = p(b|y) \;\;\; \forall a,b,x,y.
\end{equation}

We denote the set of all behaviours that satisfy the conditions for local determinism within a public scenario $\mathcal{S}$ as $\mathbb{LD}(\mathcal{S})$. Like $\mathbb{NS}(\mathcal{S})$, $\mathbb{LD}(\mathcal{S})$ is a \emph{convex polytope}. Further characterization of $\mathbb{LD}(\mathcal{S})$ will be given in \ref{ch:background}\ref{sec:bell}.

We say that a correlation is deterministic for an input $i$ in $\mathcal{I}_X \subseteq \mathcal{X}$ if $p(a|x=i,y)\in\{0,1\}$ for all $y$, and similarly for inputs in $\mathcal{I}_Y \subseteq \mathcal{Y}$. Following Woodhead \cite{Woodhead14}, we define a \textbf{partially deterministic} set of behaviours $\mathbb{PD}_{\mathcal{I}_X,\mathcal{I}_Y}(\mathcal{S})$, with $\mathcal{I}_X\subseteq\mathcal{X}$ and $\mathcal{I}_Y\subseteq\mathcal{Y}$, to be the convex hull generated by points corresponding to correlations that are deterministic for inputs in $\mathcal{I}_X$ or $\mathcal{I}_Y$, but behave as the extreme points of $\mathbb{NS}(\mathcal{S})$ otherwise.

Regarding the relationships between these sets, we can see, for a given scenario $\mathcal{S}$, that, for any $\mathcal{I}_X$ and $\mathcal{I}_Y$,
\begin{equation}
    \mathbb{LD}(\mathcal{S}) \subseteq \mathbb{PD}_{\mathcal{I}_X,\mathcal{I}_Y}(\mathcal{S}) \subseteq \mathbb{NS}(\mathcal{S}).
\end{equation}
As particular cases, we have that $\mathbb{NS}(\mathcal{S}) = \mathbb{PD}_{\emptyset,\emptyset}(\mathcal{S})$, and $\mathbb{LD}(\mathcal{S}) = \mathbb{PD}_{\mathcal{X},\mathcal{Y}}(\mathcal{S})$. Also, it was shown in \cite{Woodhead14} that if, in a bipartite public scenario, all of Alice's inputs except $x=k$ are deterministic, then the respective partial deterministic polytope would be the same as the one as for local deterministic behaviours. That is to say:
\begin{equation}
\label{eq:pd:to:ld}
    \mathbb{PD}_{\mathcal{X}/k,\mathcal{I}_Y}(\mathcal{S}) = \mathbb{LD}(\mathcal{S}),
\end{equation}
for any set $\mathcal{I}_B$.



\section{Background}
\label{ch:background}

\subsection{The Wigner's friend thought experiment}

Consider a quantum system $S$ that can be in two possible orthogonal states, $\ket{\psi_1}_S$ or $\ket{\psi_2}_S$. Interacting with the system is an observer, the titular ``friend'', as introduced by Wigner \cite{Wigner61}. A second observer, or rather a \textit{superobserver}, describes the friend, along with his measurement apparatus and all other contents of his laboratory sans the system $S$, as another quantum system $F$. When interacting with the system, if the original state was $\ket{\psi_1}_S$ then the friend will see a flash of light and the state of the composite system will be $\ket{\varphi_1}_F\ket{\psi_1}_S$. If the system is originally in state $\ket{\psi_2}_S$ instead, the friend will not see a flash and the composite system will be in the state $\ket{\varphi_2}_F\ket{\psi_2}_S$. Here, the states $\ket{\varphi_1}_F$ and $\ket{\varphi_2}_F$ describe the state of the friend: when asked by the superobserver if he saw a flash he will say ``Yes'' if he is in state $\ket{\varphi_1}_F$, or ``No'' if he is state $\ket{\varphi_2}_F$.

Now, let us suppose the original state of system $S$ is some linear combination $\alpha\ket{\psi_1}_S + \beta\ket{\psi_2}_S$. After the friend interacts with the system, the composite state would be, according to standard quantum mechanics, $\alpha\ket{\varphi_1}_F\ket{\psi_1}_S + \beta\ket{\varphi_2}_F\ket{\psi_2}_S$. Here we ignore other systems that may correlate with the friend, as they made no difference to Wigner's argument; they would simply make the superposition ``bigger''. Then, if the superobserver asks the friend whether he has seen a flash or not, he will say ``Yes'' with probability $|\alpha|^2$ and ``No'' with probability $|\beta|^2$.

According to the friend, however, the composite system after the interaction would have the state $\ket{\varphi_1}_F\ket{\psi_1}_S$ or $\ket{\varphi_2}_F\ket{\psi_2}_S$, as corresponds to the friend's answer, and not the linear combination of these states. Wigner interpreted this situation as suggesting that the friend is governed by a different set of rules from other physical system, apparently due to his status as a conscious being.

\subsection{Bell's theorem}
\label{sec:bell}

Based on the scenario presented in \cite{EPR}, Bell considered a public scenario where Alice's and Bob's measurements are space-like separated from each other. This was studied under the assumptions of local determinism. Within Bell's scenario, the individual assumptions are equivalent to the following mathematical relations.
\begin{itemize}
    \item \textbf{AOE:} $\exists\, p(ab|xy),\;\forall a,b,x,y.$
    \item \textbf{Predetermination:} $\exists\,\lambda:p(ab|\lambda xy)\in\{0,1\},\;\forall a,b,x,y.$
    \item \textbf{Local agency:} $p(a|xy\lambda) = p(a|x\lambda)$, $p(b|xy\lambda) = p(b|y\lambda)$ and $p(\lambda|xy) = p(\lambda)$, $\forall a,b,x,y,\lambda.$
\end{itemize}
The joint probability $p(ab|xy)$ in the AOE assumption has to be consistent with the empirical frequency $f(ab|xy)$ that can be computed at a later time when Alice and Bob communicate with each other. To be more specific, AOE means that $p(ab|xy)$ has a value independently of Alice and Bob communicating their results.

Under these conditions it is possible to characterize $\mathbb{LD}(\mathbb{S})$ by describing the facets of the polytope as inequalities. Such inequalities can then be used to experimentally test if some physical phenomenon can be explained by a theory that subscribes to local determinism. The specific inequalities will depend on the public scenario $\mathcal{S}$. For example, for a scenario with a choice between two inputs per party, and with two possible outcomes per input, we can describe $\mathbb{LD}(\mathcal{S})$ using the CH inequalities \cite{Clauser74} (along with some trivial positivity constraints) of the form:
\begin{equation}
    \label{eq:ch}
    p_{++}(A_1B_1) +p_{++}(A_1B_2) + p_{++}(A_2B_1) - p_{++}(A_2B_2) - p_{+}(A_1) - p_{+}(B_1) \le 0,
\end{equation}
where $p_{++}(A_i B_j)=p(a=+1,b=+1|x=i,y=j)$.

We can easily check that quantum theory provides a violation of Eq.~(\ref{eq:ch}). Indeed, let us consider the two-photon state in the entangled polarization state $\ket{\Phi^+} = \dfrac{1}{\sqrt{2}}(\ket{HH} + \ket{VV})$. Say Alice and Bob perform projective POVMs of the form $\{ \proj{\phi}{\phi}, \proj{\phi+\frac{\pi}{2}}{\phi+\frac{\pi}{2}}\}$, where the first operator corresponds to the outcome $+1$ and the second to the outcome $-1$, and
\begin{equation}
    \ket{\phi} = \cos{\phi}\ket{H} + \sin{\phi}\ket{V}.
\end{equation}
Then, if Alice's measurements correspond to $\phi=0$ for $x=1$ and $\phi=\frac{\pi}{4}$ for $x=2$, and if Bob's measurements correspond to $\phi=\frac{\pi}{8}$ for $y=1$ and $\phi=-\frac{\pi}{8}$ for $y=2$, it follows that 
\begin{equation}
    p_{++}(A_1B_1) +p_{++}(A_1B_2) + p_{++}(A_2B_1) - p_{++}(A_2B_2) - p_{+}(A_1) - p_{+}(B_1) = \dfrac{\sqrt{2}-1}{2}.
\end{equation}
This is a clear violation of Eq.~(\ref{eq:ch}), which servers as a proof for Bell's theorem:
\begin{thm}[Bell's theorem]
\label{thm:bell}
There exist quantum phenomena for which there is no model satisfying local determinism.
\end{thm}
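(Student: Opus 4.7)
The plan is to exhibit a specific quantum phenomenon whose correlations lie outside $\mathbb{LD}(\mathcal{S})$ for a carefully chosen public scenario $\mathcal{S}$, thereby producing the required counterexample. Concretely, I would work in the simplest nontrivial Bell scenario: $M_A=M_B=2$ inputs per party and $N_A=N_B=2$ outcomes each, so that $\mathbb{LD}(\mathcal{S})$ admits a finite facet description. The strategy has two halves: (i) derive a necessary inequality satisfied by every behaviour in $\mathbb{LD}(\mathcal{S})$, and (ii) produce a quantum prediction that violates it.

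For part (i), I would start from the mathematical reformulations of AOE, predetermination and local agency listed in Section \ref{sec:bell}. Together they imply that any behaviour admits a decomposition
\begin{equation}
p(ab|xy) = \sum_{\lambda} p(\lambda)\, p(a|x\lambda)\, p(b|y\lambda),
\end{equation}
with each $p(a|x\lambda),p(b|y\lambda)\in\{0,1\}$. Since such behaviours are convex combinations of deterministic local strategies, $\mathbb{LD}(\mathcal{S})$ is the convex hull of a finite set of extreme points, hence a convex polytope, and its facets yield the CH inequalities (\ref{eq:ch}). The only nontrivial step here is checking, deterministic-strategy by deterministic-strategy, that each extreme point of $\mathbb{LD}(\mathcal{S})$ satisfies (\ref{eq:ch}); this is a finite verification over the $2^{4}\times 2^{4}$ possible assignments, and I would simply note that it is routine and already known from the CH literature \cite{Clauser74}.

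For part (ii), I would use the explicit quantum model already flagged in the excerpt: the entangled two-photon state $\ket{\Phi^+}=\tfrac{1}{\sqrt{2}}(\ket{HH}+\ket{VV})$, together with the four projective polarization measurements at angles $0,\tfrac{\pi}{4}$ for Alice and $\tfrac{\pi}{8},-\tfrac{\pi}{8}$ for Bob. I would compute the six probabilities appearing in (\ref{eq:ch}) using the Born rule and the identity $|\braket{\phi}{\phi'}|^{2}=\cos^{2}(\phi-\phi')$ for linear polarizations, obtaining pairwise terms of the form $\tfrac{1}{2}\cos^{2}(\phi_{A}-\phi_{B})$ for the joint probabilities and $\tfrac{1}{2}$ for the marginals. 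Plugging these in gives $\tfrac{\sqrt{2}-1}{2}>0$, contradicting (\ref{eq:ch}).

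The conclusion is then immediate: because this quantum behaviour lies outside the CH facet, it lies outside $\mathbb{LD}(\mathcal{S})$, so no local deterministic model can reproduce it, proving Theorem \ref{thm:bell}. The main obstacle is not conceptual but presentational: the proof relies on the polytope characterization of $\mathbb{LD}(\mathcal{S})$ (that its facets are exactly the CH-type inequalities in this scenario), which I would either cite from \cite{Clauser74} or sketch via the convex-hull argument above rather than re-derive in full. Everything else reduces to the explicit Born-rule calculation, which is a direct substitution.
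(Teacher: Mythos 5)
Your proposal is correct and follows essentially the same route as the paper: characterize $\mathbb{LD}(\mathcal{S})$ in the two-input, two-outcome scenario by the CH inequalities of \cite{Clauser74}, then exhibit the violation $\tfrac{\sqrt{2}-1}{2}$ of Eq.~(\ref{eq:ch}) using $\ket{\Phi^+}$ with the very same polarization angles $0,\tfrac{\pi}{4}$ and $\pm\tfrac{\pi}{8}$. The only difference is that you sketch the convex-hull/deterministic-strategy justification of the CH facets slightly more explicitly than the paper, which simply cites it.
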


\subsection{Brukner's theorem}

Let us first consider the specifics of a single Wigner's friend setup, as described by Brukner in \cite{Brukner18}. The friend is inside a closed laboratory and performs a measurement on a system $S$, for example, the polarization of a photon. We will consider a particle that is initially in the state $\ket{D}_S = \frac{1}{\sqrt{2}}(\ket{H}_S + \ket{V}_S)$, with the friend measuring in the projective basis $\{\proj{H}{H}_S, \proj{V}{V}_S\}$.


On the other hand, the superobserver is outside this laboratory and can perform quantum operations on the overall system composed of $S$ and all the other elements inside the lab (the measurement apparatus, the friend's memory and sensory organs, etc.), which we denote by $F$. The state of $F$ after the friend observes that the photon has horizontal or vertical polarization will be represented by $\ket{L_{H}}_F$ or $\ket{L_{V}}_F$, respectively. Therefore, the state of the composite system after the measurement will be
\begin{equation}
    \ket{\Phi^+}_{SF} = \dfrac{1}{\sqrt{2}}(\ket{H}_S \ket{L_{H}}_F + \ket{V}_S \ket{L_{V}}_F).
\end{equation}

We can then see, for example, that if the superobserver wanted to know what outcome was observed by his friend, this will correspond to performing the POVM $\{\mathds{1}_{S}\otimes\proj{L_H}{L_H}_F, \mathds{1}_{S}\otimes\proj{L_V}{L_V}_F\}$. Alternatively, he could perform a measurement that highlights the entanglement between $S$ and $F$, for example, the projective measurement in the basis formed by the states $\ket{\Phi^\pm}_{SF} = \frac{1}{\sqrt{2}}(\ket{H}_S \ket{L_{H}}_F \pm \ket{V}_S \ket{L_{V}}_F)$ and $\ket{\Psi^\pm}_{SF} = \frac{1}{\sqrt{2}}(\ket{H}_S \ket{L_{V}}_F \pm \ket{H}_S \ket{L_{V}}_F)$.

\begin{figure}[ht]
    \centering
    \includegraphics[width=0.8\linewidth]{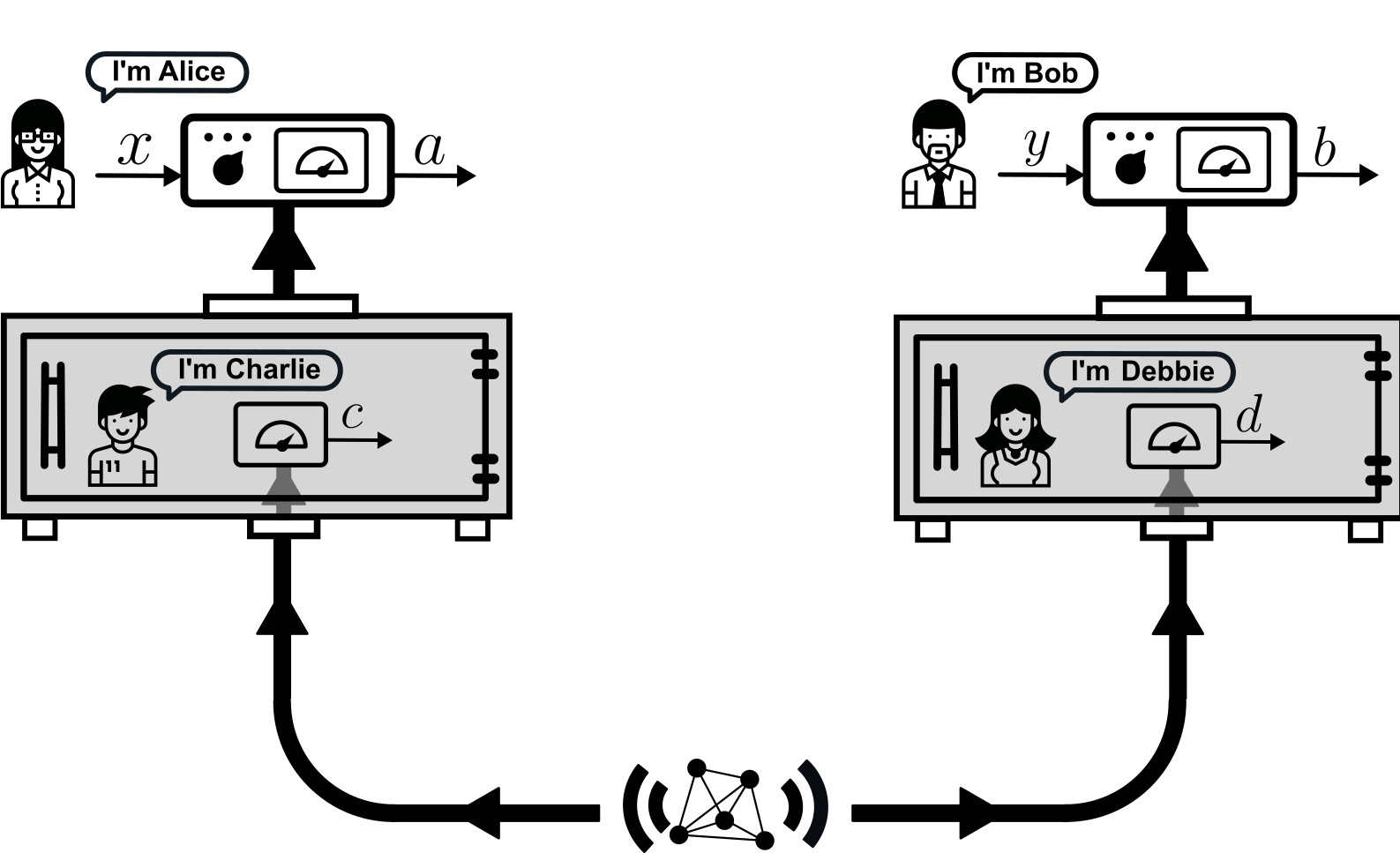}
    \caption{Extendend Wigner's friend scenario, originally presented by Brukner in \cite{Brukner18}. A pair of entangled particles is shared between Charlie and Debbie, two distant experimenters each inside a closed laboratory. Alice (Bob) is outside Charlie's (Debbie's) lab and can perform arbitrary quantum operations on it as a whole. Charlie (Debbie) performs a measurement on his (her) particle and observes the outcome $c$ ($d$), while Alice (Bob) describes this procedure as a unitary transformation. She (He) then performs a measurement, labelled $x$ ($y$), on the entire lab, with an outcome $a$ ($b$).}
    \label{fig:ewfs}
\end{figure}

Now, let us consider four experimenters, divided in two parts: Alice and Charlie in one region of space, and Bob and Debbie in another, as displayed in Fig.~\ref{fig:ewfs}. Alice and Charlie are in a ``Wigner's friend'' relationship, with Alice as a superobserver and Charlie as her friend. Likewise, Bob and Debbie are in the same relationship, with Bob as the superobserver and Debbie as the friend. Finally, Charlie and Debbie share a two-particle entangled system.

Brukner considers propositions about the outcomes of different observers, of the type (adapting to our terminology and notation): $A_1$: ``The pointer of Charlie's apparatus points to $H$'' and $A_2$: ``The pointer of Alice’s apparatus points to result $\Phi^+$''.  He then defines the postulate of ``observer-independent facts'', which requires ``an assignment of truth values to statements $A_1$ and $A_2$ independently of which measurement [Alice] performs''. Statements $B_1$ and $B_2$ for Bob/Debbie are defined analogously. Brukner's postulate reads:

\begin{quote}
     ``\textbf{Postulate 1}. \textit{(``Observer-independent facts'') The truth values of the propositions $A_i$ of all observers form a Boolean algebra $\mathscr{A}$. Moreover, the algebra is equipped with a (countably additive) positive measure $p(A)\ge 0$ for all statements $A\in \mathscr{A}$,  which is the probability for the statement to be true.''}
\end{quote}

Referring to the scenario in Fig.~\ref{fig:ewfs}, we set Charlie's outcome $c=1$ to correspond to statement $A_1$ being true, and $c=-1$ otherwise. Alice has a choice between either ($x=1$) asking Charlie for his outcome or ($x=2$) performing a projective measurement containing the projector onto $|\Phi^+\rangle$. Thus for $x=1$ Alice is inferring the value of $c$, and we set $a=1$ if she infers $c=1$, and $a=-1$ otherwise. For $x=2$ Alice's outcome is similarly set to $a=1$ if $A_2$ is true, $a=-1$ otherwise.

From this scenario, Brukner arrives at his theorem:
\begin{thm}[Brukner's theorem]
\label{thm:brk}
The following statements are incompatible:
\begin{enumerate}
    \item Universal validity of quantum theory: Quantum predictions hold at any scale, even if the measured system contains objects as large as an observer (including her laboratory, memory, etc.).
    \item Locality: The choice of the measurement settings of one observer has no influence on the outcomes of the other distant observer(s).
    \item Freedom of choice: The choice of measurement settings is statistically independent from the rest of the experiment.
    \item Observer-independent facts: One can jointly assign truth values to the propositions about observed outcomes (“facts”) of different observers (as specified in the postulate above).
\end{enumerate}
\end{thm}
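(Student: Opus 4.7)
The plan is to recast the extended Wigner's friend scenario of Fig.~\ref{fig:ewfs} as a bipartite Bell scenario with two dichotomic settings per party, show that assumptions 2--4 together force the resulting behaviour $\bar{p}\in\mathbb{LD}(\mathcal{S})$, and then invoke the quantum calculation of Section~\ref{sec:bell} to exhibit a violation of the CH inequality~(\ref{eq:ch}), thereby contradicting assumption~1.

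First I would use Postulate~1 (observer-independent facts) to assert the existence of a single hidden variable $\lambda = (A_1, A_2, B_1, B_2)$ consisting of the joint truth values of all four propositions, whose distribution $p(\lambda)$ is well defined by the probability measure on the Boolean algebra $\mathscr{A}$. Alice's recorded outcome $a$ under setting $x=i$ is then the truth value of $A_i$, and analogously $b$ is the truth value of $B_j$ under $y=j$; in particular $p(a|\lambda,x)\in\{0,1\}$ for every $a,x$, and similarly for Bob. Next I would invoke Locality (assumption~2) to conclude that $p(a|xy\lambda) = p(a|x\lambda)$ and $p(b|xy\lambda) = p(b|y\lambda)$, and Freedom of Choice (assumption~3) to conclude that $p(\lambda|xy) = p(\lambda)$. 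These three conditions are precisely the mathematical content of local determinism articulated in Section~\ref{sec:bell}, so the observable behaviour $\bar{p}$ lies in $\mathbb{LD}(\mathcal{S})$ for the public scenario with two binary inputs and two binary outputs per party, and therefore must satisfy the CH inequality~(\ref{eq:ch}).

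Finally, the quantum prediction for the EWF scenario can be identified with the bipartite prediction already computed in Section~\ref{sec:bell}: if $x=1$ (resp.\ $y=1$) corresponds to Alice (resp.\ Bob) reading out Charlie's (resp.\ Debbie's) outcome, while $x=2$ (resp.\ $y=2$) corresponds to a projective measurement on the full $SF$ laboratory whose $+1$ outcome is the projector onto a suitably rotated entangled basis vector, the joint statistics coincide with those of the two-photon $|\Phi^+\rangle$ example, yielding a CH violation of $(\sqrt{2}-1)/2>0$. Together with the previous paragraph, this contradicts the conjunction of assumptions~1--4, establishing the theorem.

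The main obstacle is justifying the identification $a = A_i$ under $x=i$ in the case $x=2$, where Alice's measurement is an entangling projective measurement on Charlie's entire laboratory rather than a simple readout of his pointer. Postulate~1 must be read as asserting observer-independent truth values both for the ``friend'' propositions (Charlie's and Debbie's pointer positions, relevant to $x=1, y=1$) and for the ``superobserver'' propositions (Alice's and Bob's pointer positions, relevant to $x=2, y=2$), so that all four of $A_1, A_2, B_1, B_2$ admit values in a common Boolean algebra despite the two pairs never being jointly measurable by any single party. Once this conceptual point is granted, the reduction to Bell's theorem~\ref{thm:bell} is essentially automatic.
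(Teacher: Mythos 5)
Your proposal is correct and follows essentially the same route the paper takes (and attributes to Brukner): observer-independent facts supply a joint hidden variable $\lambda=(A_1,A_2,B_1,B_2)$ that, combined with locality and freedom of choice, reproduces the local-determinism conditions of Section~\ref{sec:bell}, so the behaviour must lie in $\mathbb{LD}(\mathcal{S})$ and satisfy the CH inequality~(\ref{eq:ch}), which the quantum prediction for the extended Wigner's friend setup violates by $(\sqrt{2}-1)/2$. Your closing observation --- that Postulate~1 must assign truth values to $A_2$ and $B_2$ even when those measurements are not performed --- is precisely the point the paper itself emphasises immediately after stating the theorem, as the reason this assumption is stronger than needed and motivates Local Friendliness.
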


Assumption (i) is what allows the superobservers to describe their (macroscopic) friends as quantum systems. According to Brukner, the conjunction of his assumptions (ii), (iii) and (iv) leads to a local deterministic model. This means that the possible correlations p(ab|xy) are constrained by Bell's inequalities. Then, in the example above, with 2 inputs per party and 2 outcomes per input, the correlations will be constrained by the CH inequalities, with their violation serving as a proof for Brukner's theorem.

However, as was discussed in \cite{Healey18, Bong20}, the way Brukner formalised his assumption (iv) (and in particular Postulate 1) implies the existence of a truth value for the four propositions $A_1,A_2,B_1,B_2$ regardless of which measurements are actually performed by Alice and Bob. In particular it implies the existence of a truth value for $A_2$, referring to Alice's measurement $x=2$, even when Alice does not perform that measurement (and similarly for $B_2$). In other words, the postulate of observer-independent facts implies the existence of a deterministic noncontextual model, which is in turn stronger than Bell's assumptions \cite{Mermin90}, and already ruled out by the Kochen-Specker theorem \cite{Kochen67}. This motivated our work in \cite{Bong20}, where we arrived at a stronger theorem than Brukner's by considering a weaker set of assumptions, namely, Local Friendliness.


\subsection{Local Friendliness no-go theorem}

The initial setup of the scenario in \cite{Bong20} is the same as shown in Fig.~\ref{fig:ewfs}, with Charlie and Debbie being inside their closed laboratories while sharing a two-particle system, and Alice and Bob being their superobservers, respectively. Charlie and Debbie make a measurement on their respective particle, and we label their corresponding outcomes as $c$ and $d$.

Alice's and Bob's measurements, however, will be different from those in \cite{Brukner18}. Here, each superobserver can choose between $M$ possible measurement settings, with $M\ge2$, whereas in \cite{Brukner18} they only had 2 choices. This generalisation to $M>2$ allows the demonstration of interesting new phenomena, as will be shown later. As before, we will denote the setting for Alice and Bob with $x$ and $y$, respectively. For $x=1$, Alice will open Charlie's laboratory and ask him directly for his outcome $c$, assigning its value to her own outcome $a$. For other values of $x$, on the other hand, she will reverse the evolution of the laboratory to a state before Charlie's measurement, then proceed to perform her own measurement directly on the particle, with such measurement depending on the specific value of $x$. Correspondingly, Bob follows a similar procedure with regard to Debbie's laboratory for his measurement $y$. We will denote this scenario as $\mathcal{S}_{1,1}$. In general, we will define $\mathcal{S}_{k,l}$ as an extended Wigner's friend scenario constructed from a public scenario $\mathcal{S}$, where Alice makes $k$ measurements where she asks Charlie directly for his outcome, and likewise for $l$ with Bob and Debbie.

The main result of \cite{Bong20} can be expressed as follows:
\begin{thm} \label{thm:no:ew}
If a superobserver can perform arbitrary quantum operations on an observer and their environment, then there exists quantum phenomena for which there is no model satisfying local friendliness.
\end{thm}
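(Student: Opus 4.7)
The plan is to show that Local Friendliness forces the observable correlations in $\mathcal{S}_{1,1}$ to lie in the partially deterministic polytope $\mathbb{PD}_{\{1\},\{1\}}(\mathcal{S})$, and then to exhibit a quantum state and measurements producing a behaviour outside this polytope. First I would introduce the absolute outcomes $c$ and $d$ recorded by Charlie and Debbie before any reversal, and translate the two LF assumptions into constraints on the joint distribution $p(abcd|xy)$. AOE allows these variables to be assigned definite values on every run, regardless of whether Alice or Bob subsequently erases the friend's record; the operational setup of $\mathcal{S}_{1,1}$ then pins down $a=c$ whenever $x=1$ and $b=d$ whenever $y=1$. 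Local agency, applied to the free choices $x$ and $y$ (each space-like separated from the other party and from the distant friend's outcome), yields $p(cd|xy)=p(cd)$, $p(a|xycd)=p(a|xc)$, and $p(b|xycd)=p(b|yd)$.

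Combining these one obtains the decomposition
\begin{equation}
p(ab|xy)=\sum_{c,d} p(cd)\,p(a|xc)\,p(b|yd),
\end{equation}
in which, for each fixed pair $(c,d)$, the conditional response is deterministic on $x=1$ (since $a=c$) and on $y=1$ (since $b=d$), while for $x\neq 1$ or $y\neq 1$ it is constrained only by positivity, normalization, and no-signalling. Since $\mathbb{PD}_{\{1\},\{1\}}(\mathcal{S})$ is by definition the convex hull of exactly such behaviours, this shows that LF entails $\bar p \in \mathbb{PD}_{\{1\},\{1\}}(\mathcal{S})$.

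To complete the theorem I would exhibit a quantum violation of an inequality defining that polytope. For the minimal case $M_A=M_B=2$, Eq.~(\ref{eq:pd:to:ld}) (Woodhead's result) collapses $\mathbb{PD}_{\{1\},\{1\}}(\mathcal{S})$ down to $\mathbb{LD}(\mathcal{S})$, so the CH violation already displayed around Eq.~(\ref{eq:ch}) can be lifted directly to $\mathcal{S}_{1,1}$: identify Alice's $x=1$ with her opening Charlie's lab and reading $c$, and $x=2$ with her reversing Charlie's measurement and performing the $\phi=\pi/4$ polarisation measurement on the photon itself, and analogously for Bob with $\phi\in\{\pi/8,-\pi/8\}$. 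The same singlet-like source that saturated Eq.~(\ref{eq:ch}) then saturates it here, certifying that $\bar p \notin \mathbb{PD}_{\{1\},\{1\}}(\mathcal{S})$ and hence that no LF model reproduces the observed statistics.

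The main conceptual obstacle lies in the first step: invoking AOE to assign absolute values to $c$ and $d$ on the very runs where Alice or Bob unitarily reverses the friend's measurement, erasing every record of it. One must justify that ``Charlie observed $c$'' is a space-time event with a singular value even though no later agent can verify it; this is precisely the content of AOE and marks the key weakening relative to Brukner's ``observer-independent facts'', so the argument must be made with care to ensure that the resulting constraint is genuinely weaker than a noncontextual hidden-variable assignment and cannot be collapsed into one by a stronger reading.
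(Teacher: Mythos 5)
Your overall strategy --- derive $\mathbb{LF}(\mathcal{S}_{1,1})=\mathbb{PD}_{\{1\},\{1\}}(\mathcal{S})$ from AOE plus local agency, collapse this to $\mathbb{LD}(\mathcal{S})$ for $M=2$ via Eq.~(\ref{eq:pd:to:ld}), and then import the CH violation --- is exactly the route the paper attributes to \cite{Bong20}. However, your central displayed decomposition, $p(ab|xy)=\sum_{c,d}p(cd)\,p(a|xc)\,p(b|yd)$, is not licensed by the LF assumptions, and is in fact inconsistent with the sentence that follows it. Local agency gives you $p(cd|xy)=p(cd)$, $p(a|xycd)=p(a|xcd)$ and $p(b|xycd)=p(b|ycd)$ (it concerns conditioning on the \emph{free choices} $x$ and $y$ only), but it does not give the conditional factorization $p(ab|xycd)=p(a|xcd)\,p(b|ycd)$. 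That factorization is outcome independence given $(c,d)$, which in Bell's theorem is delivered by predetermination --- precisely the assumption LF drops. What LF actually yields is $p(ab|xy)=\sum_{cd}p(cd)\,p(ab|xycd)$ with $p(ab|xycd)$ deterministic ($a=c$, resp.\ $b=d$) on $x=1$, resp.\ $y=1$, and otherwise constrained only by no-signalling; the extreme points for the unconstrained settings include PR-box-type behaviours, which is why the convex hull is $\mathbb{PD}_{\{1\},\{1\}}(\mathcal{S})$ and not $\mathbb{LD}(\mathcal{S})$. Your product form generates only $\mathbb{LD}(\mathcal{S})$, i.e., it ``proves'' that LF implies Bell-locality for every $M$ --- which the paper explicitly contradicts for $M=3$, where $\mathbb{LF}(\mathcal{S}_{1,1})$ is a proper superset of $\mathbb{LD}(\mathcal{S})$. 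Since the theorem requires showing that the \emph{actual} consequences of LF are violated by quantum theory, deriving a too-strong constraint and then violating that constraint leaves a logical gap.

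The gap is easily repaired and does not affect your endgame: replace the product decomposition by the no-signalling one above, conclude $\bar p\in\mathbb{PD}_{\{1\},\{1\}}(\mathcal{S})$, and then your $M=2$ argument is sound, since for two settings per side Eq.~(\ref{eq:pd:to:ld}) gives $\mathbb{PD}_{\{1\},\{1\}}(\mathcal{S})=\mathbb{LD}(\mathcal{S})$, and the CH violation of Eq.~(\ref{eq:ch}) (with $x=1$, $y=1$ realized by opening the friends' labs and the other settings by reversal followed by direct polarization measurements) finishes the proof. A minor additional slip: even granting local agency you may not drop $d$ from the conditioning in $p(a|xcd)$, since $d$ is an outcome, not a free choice; this becomes irrelevant once the factorization is removed.
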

This is quite similar to Bell's theorem (theorem \ref{thm:bell}), save for the assumption that there can be a superobserver capable of performing quantum operations on the friend, and that this theorem rules out the assumptions of local friendliness, as opposed to local determinism. Therefore, theorem \ref{thm:no:ew} drops the assumption of predetermination altogether. Theorem \ref{thm:no:ew} is also very similar to Brukner's theorem (theorem \ref{thm:brk}), in that the condition that a superobserver can perform arbitrary quantum operations on an observer and their environment corresponds to Brukner's assumption of universal validity of quantum theory, and his assumptions of locality and freedom of choice together imply local agency. However, the assumption of AOE is weaker than Brukner's assumption of observer-independent facts. In particular, it does not assume that there is a truth value for measurements that are not performed by any observer in a given run of an experiment.

In \cite{Bong20} theorem \ref{thm:no:ew} was proven by characterizing the polytope of correlations $\mathbb{LF}(\mathcal{S}_{1,1})$ that arises in this scenario for a given value of $M$ under the assumptions of Local Friendliness, and then testing, theoretically, if a quantum setup could violate the inequalities that describe the facets of said polytope, similarly to how the violation of a Bell inequality is produced by correlations outside $\mathbb{LD}(\mathcal{S})$. In fact, these new polytopes are all partially deterministic polytopes of the form $\mathbb{LF}(\mathcal{S}_{1,1}) = \mathbb{PD}_{\{1\},\{1\}}(\mathcal{S})$, which in turn implies that $\mathbb{LF}(\mathcal{S}_{1,1})\subseteq\mathbb{LD}(\mathcal{S})$. It was shown that, for $M=2$, it holds that $\mathbb{LF}(\mathcal{S}_{1,1}) = \mathbb{LD}(\mathcal{S})$, so the violation of a Bell inequality is sufficient to prove the theorem.

More interesting are the cases where $M>2$, as the polytope in question will be a proper superset of the local deterministic one for the same scenario. The case $M=3$ was studied in detail in \cite{Bong20}. It was seen there that the characterization of new categories of inequalites is necessary, as not all the facets of $\mathbb{LF}(\mathcal{S}_{1,1})$ are equivalent to Bell inequalities. Furthermore, the inverse is not true either, as not all Bell inequalites are facets of $\mathbb{LF}(\mathcal{S}_{1,1})$. Indeed, we have that CH inequalities are facets of $\mathbb{LF}(\mathcal{S}_{1,1})$ as long as they include at least one of the settings $x=1$ or $y=1$, but not otherwise. This last point suggests that we could consider a scenario where only one of Alice and Bob has a friend \cite{Wiseman22}, as we will do in the next section.

A proof-of-principle experiment for violations of facets of $\mathbb{LF}(\mathcal{S}_{1,1})$ was realized in \cite{Bong20}. Instead of an actual observer, however, a photon polarization qubit was used as the ``friend'', with the friend's ``observation'' corresponding to said photon's path. An experiment that considered an actual observer, for example, one that made use of a sufficiently advanced quantum computer, would need to address many more technological and ethical assumptions. These matters are discussed in detail in \cite{Wiseman22}.

\section{Sequential extended Wigner's friends scenarios}
\label{ch:sewfs}

We have seen that it is possible for quantum theory to violate the assumptions of Local Friendliness. We have also seen that, in general, $\mathbb{LF}(\mathcal{S}_{1,1})$ will contain $\mathbb{LD}(\mathcal{S})$, but that there are specific scenarios where these two are equal. Because we are much more familiar with the facets of $\mathbb{LD}(\mathcal{S})$ than those of $\mathbb{LF}(\mathcal{S}_{1,1})$, it would be of interest to find scenarios where these two sets are the same.

Let us consider three experimenters: Alice, Bob and Charlie. Alice and Charlie are in a Wigner's friend arrangement, with Alice as the superobserver and Charlie as the friend. Bob and Charlie share a two-particle system. Both parties are distant enough such that Alice's and Charlie's experiments are space-like separated from Bob's, as illustrated in Fig. \ref{fig:spacetime}. Bob's choice of measurement on his particle is labelled by $y$, with $N_B$ possible choices, and with output labelled by $b$.

\begin{figure}
    \centering
    \includegraphics[width=0.7\linewidth]{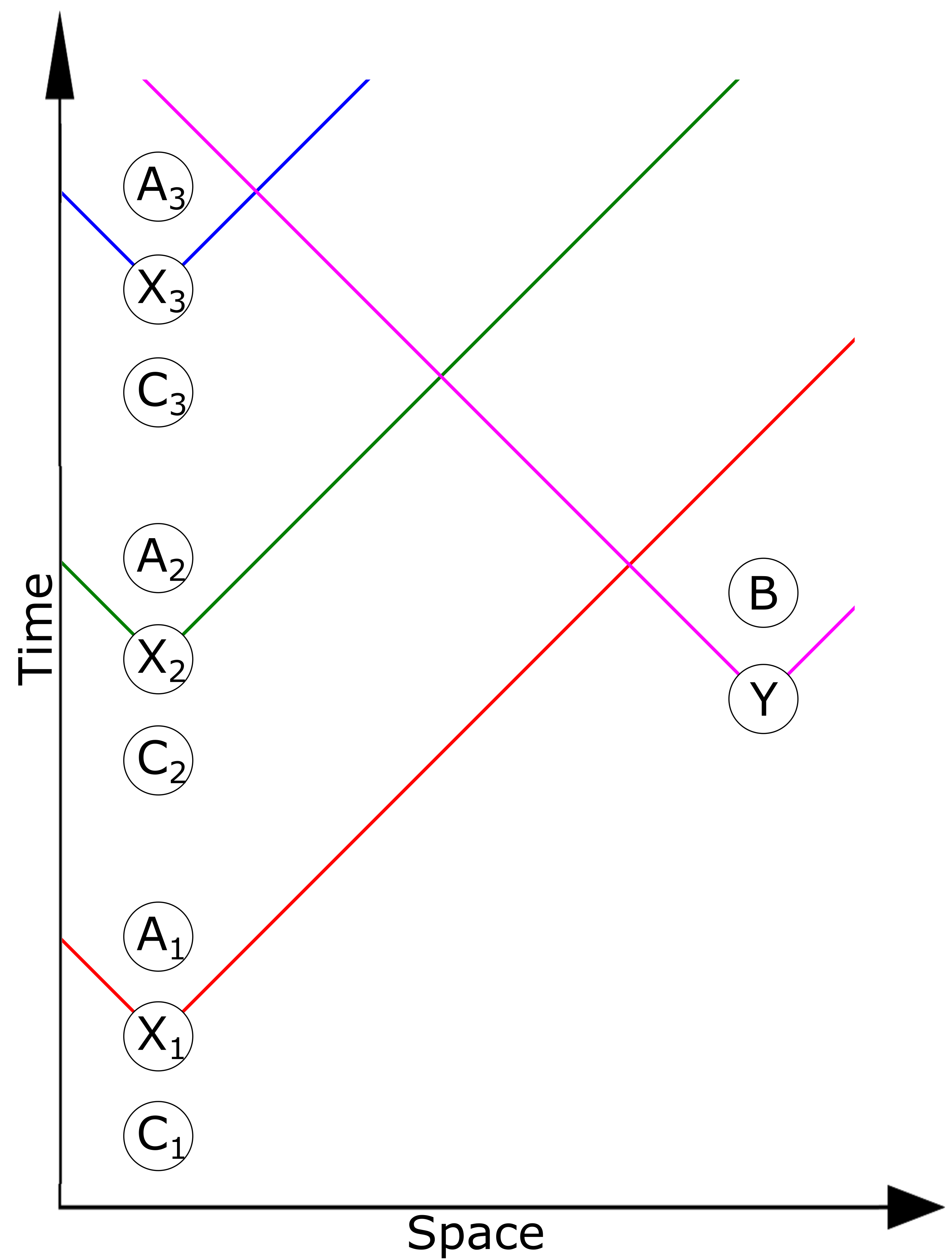}
    \caption{Space time diagram for relevant events in a sequential EWFS with $R=3$. Coloured lines represent the edges of the future light-cones for each of the free choices made in the experiment. Bob's choice of measurement setting $y$ is space-like separated from all events in the region where Alice and Charlie are, up until Alice observes her final possible outcome $a_3$. Similarly, Alice's choices for $x_1$, $x_2$ and $x_3$ are space-like separated from Bob's measurement events.}
    \label{fig:spacetime}
\end{figure}

Charlie, inside his laboratory, performs a measurement, labelled by $z_1$, on his particle. It should be remarked that $z_1$ is not a free choice, and is fixed over all iterations of the experiment. Alice, unlike in any of the previous scenarios, can choose how and when to measure Charlie's laboratory, out of a list of $R$ ways and times. The different potential measurements will be labelled with an index $i$, with $1\le i \le R$. At time $t=t_1$, Alice will have two possible choices. She can choose to open Charlie's laboratory and ask him directly for his outcome $c_1$ and register it as her outcome $a_1$, which corresponds to the input $x_1=1$. This will end Alice's experiment. If $x_1=0$ instead, then Alice will not open the laboratory. Rather, she will reverse its evolution so that its state is equal to what it had been at a point in time before Charlie's measurement, record $a_1=\varnothing$, and instructs Charlie to make measurement $z_2$, and the experiment continues. Again, $z_2$ is not a free choice made by any agent, and will be the same across all iterations of the experiment. As opposed to when Charlie communicates with Alice, when Alice instructs Charlie this does not change the quantum state of Charlie's laboratory, including the particle, as it is a closed system. (Note that this is just a result of quantum physics, not a metaphysical assumption.)

If $x_1=0$, then, at a later time $t=t_2$, Alice will make a second measurement, $i=2$, with a binary input $x_2$. As before, if $x_2=1$ Alice asks Charlie for his outcome $c_2$, which has no assumed relation to $c_1$, and her outcome will be $a_2=c_2$. If $x_2=0$ she will behave in a similar way as with $x_1=0$, with her reversing Charlie's laboratory to a state before his measurement was performed (but after she communicated with him and gave him new instructions), giving him instructions to perform measurement $z_3$, and setting $a_2=\varnothing$.

Successive $x_i$ follow the same formula.  Provided that she has not yet finished her experiment, Alice will either ask Charlie directly for the value of $c_i$ if $x_i=1$, so $a_i=c_1$, or reverse the evolution of Charlie's laboratory if $x_i=0$. In the latter case, Charlie will perform a new measurement, according to the latest instructions received, on his particle. Alice will register $a_i=\varnothing$ if $i\le R$, but she will measure the particle directly in a fixed basis if $i=R$, and so $a_R\neq \varnothing$. Any non-observed $c_i$ and $a_i$ will be considered to be equal to $\varnothing$. This protocol is illustrated in Fig. \ref{fig:swfs}.

\begin{figure}
    \centering
    \includegraphics[width=0.7\linewidth]{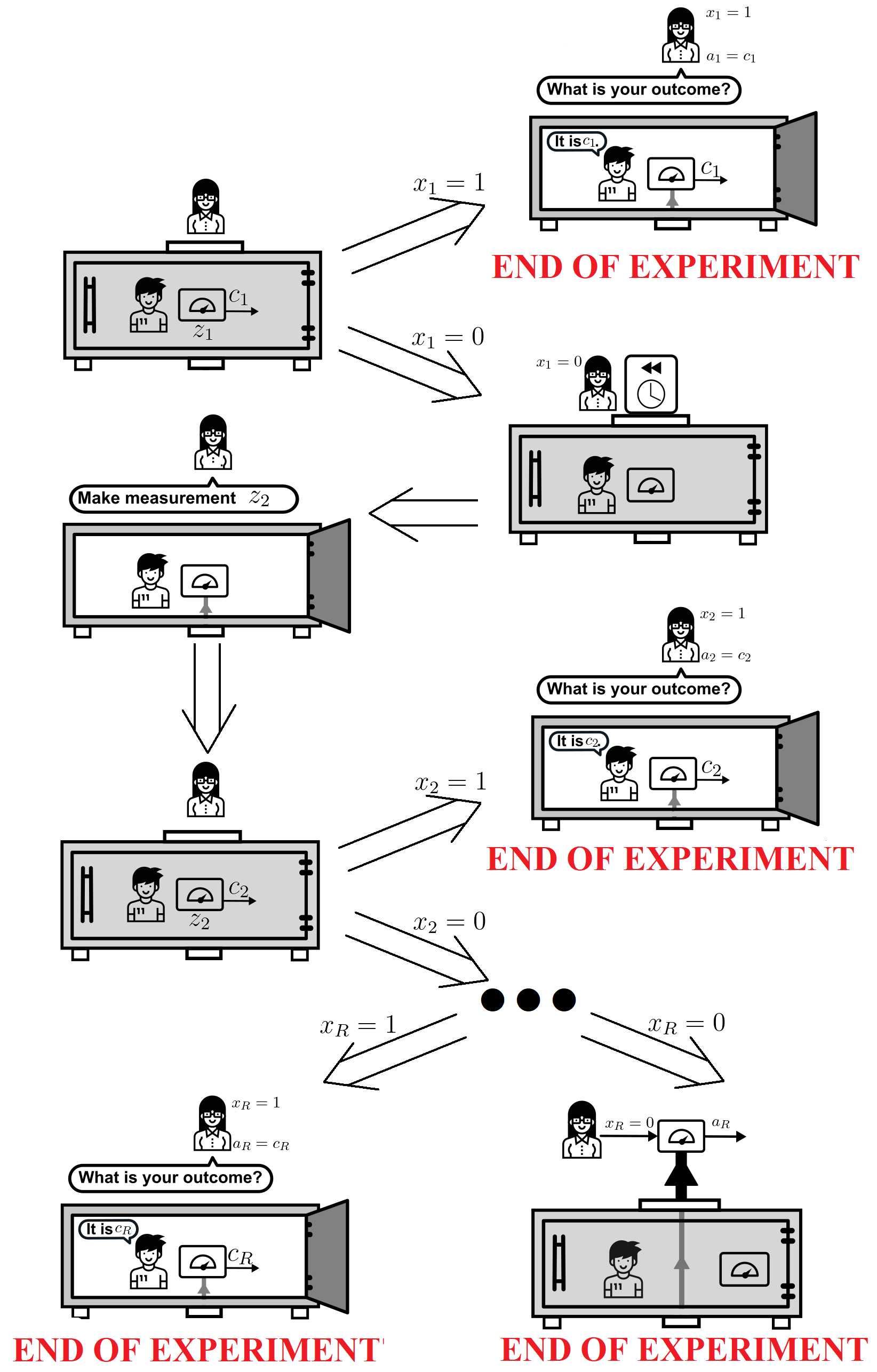}
    \caption{Protocol for Alice and Charlie in the sequential EWFS. Charlie performs a measurement $z_1$ on his outcome, observing outcome $c_1$. Then, either (if $x_1=1$) Alice opens the lab, asks Charlie for his outcome, records $a_1=c_1$ and ends the experiment, or (if $x_1=0$) Alice reverses Charlie's measurement and instructs him to perform a new measurment, $z_2$. This process repeats until Alice ends the experiment after asking Charlie directly for his outcome when $x_i=1$ in some round $i$, or, if all $x_i=0$ by round $i=R$, by measuring the particle herself, in some fixed basis, after reversing Charlie's last measurement.}
    \label{fig:swfs}
\end{figure}

For compactness, we will introduce the following notation. Let $\tilde{a}$ be the first non-null value among all the $a_i$, so it will correspond to the outcome observed by Alice at the end of the experiment. Let $\tilde{x}$ be the first among all the $i$ such that $x_i=1$, except in the case where $x_i=0$ for all $i$, in which case $\tilde{x}=R+1$. We also define $\LRc = (c_1,c_2,\cdots,c_R)$, and $\Lc{i} = (c_1,c_2,\cdots,c_i)$.

Within this scenario, the assumptions of local friendliness formally translate as
\begin{align}
    \mathbf{AOE:}\; &\exists\, p(\tilde{a}b\LRc|\tilde{x}y),\:\:\forall \tilde{a},b,\LRc,\tilde{x},y,\;\;\mathrm{s.t.}\nonumber\\
    & p(\tilde{a}|b,\LRc,\tilde{x}=i,y) = \delta_{\tilde{a},c_i},\;\;\forall\tilde{a},b,\LRc,y,1\le i \le R.\label{eq:sequential:macroreality} \\
    \nonumber \\
    \mathbf{Local\;agency:}\;& p(b\Lc{j}|x_i x_k y) = p(b\Lc{j}|x_k y),\;\;\forall b,i,y,k<j\le i, \label{eq:sequential:la:1}\\
    & p(\tilde{a}\LRc|\tilde{x}y) = p(\tilde{a}\LRc|\tilde{x}),\;\;\forall \tilde{a},\LRc,\tilde{x},y.\label{eq:sequential:la:2}
\end{align}

\section{Theorem}
\label{ch:theorem}

\begin{thm}
\label{thm:sewfs}
The violation of any Bell inequality within a sequential EWFS implies the violation of Local Friendliness.
\end{thm}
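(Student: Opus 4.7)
The plan is to prove the contrapositive in its polytope form: if local friendliness holds, every behaviour $p(\tilde{a},b|\tilde{x},y)$ realised in the sequential EWFS lies in $\mathbb{LD}(\mathcal{S})$, where $\mathcal{S}$ denotes the underlying public scenario with Alice's setting $\tilde{x}\in\{1,\ldots,R+1\}$ and outcome $\tilde{a}$, and Bob's setting $y$ with outcome $b$. Any correlation violating a Bell inequality then automatically lies outside the LF polytope, which is exactly the theorem.

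I would first unpack the LF assumptions. AOE, in the form of Eq.~(\ref{eq:sequential:macroreality}), asserts the existence of a joint distribution $p(\tilde{a},b,\LRc|\tilde{x},y)$ in which Alice's outcome is pinned to one of Charlie's records via $\tilde{a}=c_{\tilde{x}}$ for every $\tilde{x}\le R$. The goal is then to exhibit $\LRc$ as a classical hidden variable witnessing membership of the behaviour in the partially deterministic polytope $\mathbb{PD}_{\{1,\ldots,R\},\emptyset}(\mathcal{S})$. Eq.~(\ref{eq:sequential:la:2}) at once removes any dependence of the Alice--Charlie marginal on $y$, while a downward induction on $j$ in Eq.~(\ref{eq:sequential:la:1}) (peeling off $x_R$ first, then $x_{R-1}$, and so on) strips Bob's conditional distribution of its dependence on the components of $\tilde{x}$. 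The result is a decomposition of $p(\tilde{a},b|\tilde{x},y)$ as a convex combination of no-signalling behaviours, each of which is deterministic on Alice's side for every $\tilde{x}\in\{1,\ldots,R\}$.

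By the definition of partial determinism, this decomposition places the behaviour in $\mathbb{PD}_{\{1,\ldots,R\},\emptyset}(\mathcal{S})$, yielding $\mathbb{LF}(\mathcal{S})\subseteq\mathbb{PD}_{\{1,\ldots,R\},\emptyset}(\mathcal{S})$. Since $\tilde{x}=R+1$ is the sole nondeterministic setting among Alice's $R+1$ inputs, Woodhead's identity~(\ref{eq:pd:to:ld}) with $k=R+1$ and $\mathcal{I}_Y=\emptyset$ collapses this polytope to $\mathbb{LD}(\mathcal{S})$, completing the chain $\mathbb{LF}\subseteq\mathbb{PD}\subseteq\mathbb{LD}$ and proving the theorem.

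The hard part is the Bob-side step used to build the hidden-variable decomposition: one must verify that conditioning on the full counterfactual tuple $\LRc$---some of whose components correspond to measurements in branches that Alice did not actually pursue---does not reintroduce a dependence of $b$ on $\tilde{x}$. The form of Eq.~(\ref{eq:sequential:la:1}), with the restriction $k<j\le i$, is precisely what makes this possible, provided the induction is organised so that each $x_i$ is dropped only after one has conditioned on some $\Lc{j}$ with $j\le i$; the spacelike layout in Fig.~\ref{fig:spacetime} is what underwrites those individual steps.
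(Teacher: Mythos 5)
Your proposal takes essentially the same route as the paper's own proof: AOE in the form of Eq.~(\ref{eq:sequential:macroreality}) pins $\tilde{a}=c_{\tilde{x}}$, Eqs.~(\ref{eq:sequential:la:2}) and (\ref{eq:sequential:la:1}) respectively strip the $y$-dependence from the Alice--Charlie marginal and the $\tilde{x}$-dependence from Bob's conditional, the tuple $\LRc$ then serves as the hidden variable exhibiting the behaviour as a point of $\mathbb{PD}_{\mathcal{X}/\{R+1\},\emptyset}(\mathcal{S})$, and Woodhead's identity~(\ref{eq:pd:to:ld}) with $k=R+1$ finishes the argument. The only blemish is notational: the closing chain should read $\mathbb{LF}\subseteq\mathbb{PD}_{\mathcal{X}/\{R+1\},\emptyset}(\mathcal{S})=\mathbb{LD}(\mathcal{S})$ rather than $\mathbb{LF}\subseteq\mathbb{PD}\subseteq\mathbb{LD}$, since Woodhead's result is an equality and in general $\mathbb{PD}\supseteq\mathbb{LD}$ --- which is what your surrounding prose in fact says.
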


\begin{proof}

To prove the theorem, it is enough to show that the set of all correlations that obey Local Friendliness in a sequential EWFS is equal to the set of all local deterministic correlations in that same scenario. Firstly, we must characterize LF correlations within a sequential EWFS. For this, we define $\Rc{i}=(c_{i+1},\cdots,c_R)$ for $1\le i \le R-1$, and $\Rc{R}$ as the empty set.

\begin{align}
    p(\tilde{a}b|\tilde{x}=i,y) &= \sum_{\LRc} p(\tilde{a}b\LRc|\tilde{x}=i,y)\\
    &= \begin{cases}
 \sum_{\LRc} p(\tilde{a}|b\LRc,\tilde{x}=i,y) p(b \Rc{i}|\Lc{i},\tilde{x}=i,y) p(\Lc{i}|\tilde{x}=i,y) & \text{ if } 1\le i \le R \\ 
 \sum_{\LRc} p(\tilde{a}b|\LRc,\tilde{x}=i,y) p(\LRc|\tilde{x}=i,y) & \text{ if } i=R+1.
\end{cases}
\end{align}

From (\ref{eq:sequential:macroreality}), for $1\le i \le R$,
\begin{equation}
     \sum_{\LRc} p(\tilde{a}|b\LRc,\tilde{x}=i,y) p(b \Rc{i}|\Lc{i},\tilde{x}=i,y) p(\Lc{i}|\tilde{x}=i,y) = \sum_{\LRc} \delta_{\tilde{a},c_i} p(b \Rc{i}|\Lc{i},\tilde{x}=i,y) p(\Lc{i}|\tilde{x}=i,y)
\end{equation}


From (\ref{eq:sequential:la:2})
\begin{align}
    p(\tilde{a}b|\tilde{x}=i,y) &= \begin{cases}
 \sum_{\LRc} \delta_{\tilde{a},c_i} p(b \Rc{i}| \Lc{i},\tilde{x}=i,y) p(\Lc{i}|\tilde{x}=i)& \text{ if } 1\le i \le R, \\ 
 \sum_{\LRc} p(\tilde{a}b|\LRc,\tilde{x}=i,y) p(\LRc|\tilde{x}=i) & \text{ if } i=R+1.
\end{cases}\\
    &=  \begin{cases}
    \sum_{\Lc{i}} \delta_{\tilde{a},c_i} p(b|\Lc{i},\tilde{x}=i,y) p(\Lc{i}|\tilde{x}=i)& \text{ if } 1\le i \le R, \\ 
     \sum_{\LRc} p(\tilde{a}b|\LRc,\tilde{x}=i,y) p(\LRc|\tilde{x}=i) & \text{ if } i=R+1.
    \end{cases} \label{eq:sequential:step:1}
\end{align}

From (\ref{eq:sequential:la:1}),
\begin{equation}
    p(\Lc{j}|x_i x_k) = p(\Lc{j}|x_k),\;\;\forall \LRc, k < j \le i.
\end{equation}
Then,
\begin{equation}\label{eq:sequential:pre:pcj}
    p(\Lc{i}|\tilde{x}=i) = p(\Lc{i}|x_1 = \cdots = x_{i-1} = 0, x_{i} \cdots x_R), \;\;\forall x_i,\cdots, x_R.
\end{equation}
That is to say, the probability of $\Lc{i}$ conditioned on $\tilde{x}$ will be the same among all values of $\tilde{x}\ge i$, so
\begin{equation}\label{eq:sequential_pcj}
     p(\Lc{i}|\tilde{x}=i) = p(\Lc{i}|\tilde{x}=R+1).
\end{equation}

Also from (\ref{eq:sequential:la:1}):
\begin{equation}
    p(b|\Lc{j} x_i x_k y) = p(b | \Lc{j} x_k y),\;\;\forall \LRc, k, y, < j \le i.
\end{equation}
Then, in a similar fashion to Eqs. (\ref{eq:sequential:pre:pcj}) and (\ref{eq:sequential_pcj}), we have
\begin{align}
    p(b|\Lc{i}, \tilde{x}=i, y) &= p(b |\Lc{i}, x_1 = \cdots = x_{i-1} = 0, x_{i} \cdots x_R, y), \;\;\forall x_i,\cdots, x_R\\
    &= p(b | \Lc{i}, \tilde{x}=R+1, y).\label{eq:sequential_pb}
\end{align}
We define $\hat{p}$ such that $\hat{p}(\Lc{i}) = p(\Lc{i}|\tilde{x}=R+1)$ and $\hat{p}(b |\Lc{i} y) = p(b | \Lc{i}, \tilde{x}=R+1, y)$. Thus from (\ref{eq:sequential:step:1}), \eqref{eq:sequential_pcj} and \eqref{eq:sequential_pb},
\begin{equation}
\label{eq:sequential:step:2}
    p(\tilde{a}b|\tilde{x}=i,y) = \begin{cases}
    \sum_{\Lc{i}} \delta_{\tilde{a},c_i} \hat{p}(b|\Lc{i} y) \hat{p}(\Lc{i})& \text{ if } 1\le i \le R, \\ 
     \sum_{\LRc} p(\tilde{a}b|\LRc,\tilde{x}=i,y) \hat{p}(\LRc) & \text{ if } i=R+1.
    \end{cases}
\end{equation}

We need all terms to depend on a distribution of the same ``hidden variable''. For that, we note
\begin{align}
    \sum_{\Lc{i}} \delta_{\tilde{a},c_i} \hat{p}(b|\Lc{i} y) \hat{p}(\Lc{i}) &= \sum_{\Lc{i},c_{i+1},\cdots,c_R} \delta_{\tilde{a},c_i} \hat{p}(b|\Lc{i} y) \hat{p}(\Lc{i} c_{i+1} \cdots c_R)\\
    &= \sum_{\LRc} \delta_{\tilde{a},c_i} \hat{p}(b|\Lc{i} y) \hat{p}(\LRc).
\end{align}
Replacing in (\ref{eq:sequential:step:2}),
\begin{equation}
\label{eq:sequential:step:3}
    p(\tilde{a}b|\tilde{x}=i,y) = \begin{cases}
    \sum_{\LRc} \delta_{\tilde{a},c_i} \hat{p}(b|\Lc{i} y) \hat{p}(\LRc)& \text{ if } 1\le i \le R, \\ 
     \sum_{\LRc} p(\tilde{a}b|\LRc,\tilde{x}=i,y) \hat{p}(\LRc) & \text{ if } i=R+1.
    \end{cases}
\end{equation}
The set of all correlations that satisfy (\ref{eq:sequential:step:3}) is denoted by $\mathbb{SW}(\mathcal{S}_{R,0})$.

In (\ref{eq:sequential:step:3}) the term $p(\tilde{a}b|\LRc,\tilde{x}=R+1,y)$ is only constrained by the no-signalling principle and positivity requirements, so it can be any behaviour in $\mathbb{NS}((\mathcal{A,B,}\{R+1\},\mathcal{Y}))$. The extreme points of $\mathbb{SW}(\mathcal{S}_{R,0})$ can be characterized with a variable $\zeta$ that determines the values of $\LRc(\zeta)$ and $j(\zeta)$, with $j(\zeta)$ labelling the extreme points of $\mathbb{NS}((\mathcal{A,B,}\{R+1\},\mathcal{Y}))$. Therefore, (\ref{eq:sequential:step:3}) can be rewritten as
\begin{equation}
\label{eq:sequential:step:4}
    p(\tilde{a}b|\tilde{x}=i,y) = \begin{cases}
    \sum_{\zeta} \delta_{\tilde{a},c_i(\zeta)} p^{j(\zeta)}_{\mathrm{Ext}}(b|y) \hat{p}(\zeta)& \text{ if } 1\le i \le R, \\ 
     \sum_{\zeta}  p^{j(\zeta)}_{\mathrm{Ext}}(\tilde{a}b|\tilde{x}=i,y) \hat{p}(\zeta) & \text{ if } i=R+1,
    \end{cases}
\end{equation}
where $p^{j(\zeta)}_{\mathrm{Ext}}(\tilde{a}b|\tilde{x}=i,y)$ is the value of $p(\tilde{a}b|\tilde{x}=i,y)$ at the extreme point of $\mathbb{NS}((\mathcal{A,B,}\{R+1\},\mathcal{Y}))$ labelled by $j(\zeta)$. Then, the number of extreme points in $\mathbb{SW}(\mathcal{S}_{R,0})$ will be $\#(\{j(\zeta)\}) \times \#(\{c(\zeta)\})$. If $c(\zeta)$ has a finite number of possible values, then $\mathbb{SW}(\mathcal{S}_{R,0})$ will have a finite number of extreme points.

Let us consider two correlations that satisfy (\ref{eq:sequential:step:4}), $\bar{p}_1 (\tilde{a}b|\tilde{x}y) = (p_1 (\tilde{a}b|\tilde{x},y))$ and $\bar{p}_2 (\tilde{a}b|\tilde{x}y) = (p_2 (\tilde{a}b|\tilde{x},y))$. We define $\bar{p}' = \alpha \bar{p}_1 + (1-\alpha) \bar{p}_2$, with $0<\alpha<1$. Then,
\begin{equation}
\label{eq:sequential:convex}
    p'(\tilde{a}b|\tilde{x}=i,y) = \begin{cases}
    \sum_{\zeta} \delta_{\tilde{a},c_i(\zeta)} p^{j(\zeta)}_{\mathrm{Ext}}(b|y) (\alpha \hat{p}_1(\zeta) + (1-\alpha) \hat{p}_2(\zeta)) & \text{ if } 1\le i \le R, \\ 
     \sum_{\zeta}  p^{j(\zeta)}_{\mathrm{Ext}}(\tilde{a}b|\tilde{x}=i,y) (\alpha \hat{p}_1(\zeta) + (1-\alpha) \hat{p}_2(\zeta)) & \text{ if } i=R+1,
    \end{cases}
\end{equation}
By defining $\hat{p}'(\zeta) = \alpha\hat{p}_1(\zeta) + (1-\alpha)\hat{p}_2(\zeta)$, we have
\begin{equation}
    p'(\tilde{a}b|\tilde{x}=i,y) = \begin{cases}
    \sum_{\zeta} \delta_{\tilde{a},c_i(\zeta)} p^{j(\zeta)}_{\mathrm{Ext}}(b|y) \hat{p}'(\zeta)& \text{ if } 1\le i \le R, \\ 
     \sum_{\zeta}  p^{j(\zeta)}_{\mathrm{Ext}}(\tilde{a}b|\tilde{x}=i,y) \hat{p}'(\zeta) & \text{ if } i=R+1,
    \end{cases}
\end{equation}
This is in the same form as (\ref{eq:sequential:step:4}), so it follows that $\bar{p}'$ is in $\mathbb{SW}(\mathcal{S}_{R,0})$. Therefore, $\mathbb{SW}(\mathcal{S}_{R,0})$ is a convex set. Because it has a finite number of extreme points, it is in fact a convex polytope. 

For $1\le i \le R$, we have that $p(\tilde{a}| \tilde{x}=i, y) = \sum_{b} p(\tilde{a}b| \tilde{x}=i, y) = \sum_{b,\zeta} \delta_{\tilde{a},c_i(\zeta)} p^{j(\zeta)}_{\mathrm{Ext}}(b|y) \hat{p}'(\zeta) = \sum_{\zeta} \delta_{\tilde{a},c_i(\zeta)} \hat{p}'(\zeta)$. At extreme points, the value of 
$\LRc$ is fixed, and so $p(\tilde{a}| \tilde{x}\neq R+1, y,\zeta) \in \{0,1\}$ at those points. That is to say, such points are deterministic for $i\in\mathcal{X}$ with $i\neq R+1$. Thus,
\begin{equation}
    \mathbb{SW}(\mathcal{S}_{R,0}) = \mathbb{PD}_{\mathcal{X}/{R+1},\emptyset}(\mathcal{S})
\end{equation}

From (\ref{eq:pd:to:ld}), we have finally
\begin{equation}
    \mathbb{SW}(\mathcal{S}_{R,0}) = \mathbb{LD}(\mathcal{S})
\end{equation}

\end{proof}

\section{Discussion}
\label{ch:discussion}

This paper had the purpose of addressing two questions, originally stated in the introduction. Is the gap between Bell inequality violation and LF inequality violation a fundamental property of the LF assumptions themselves, or is it a limitation of the scenarios to which they are being applied? What other scenarios can be constructed that include the reversal of the friend's measurement, and what can they be used for? With the sequential EWFS we address the second question, by presenting a scenario that can include several reversals of the friend's measurement. For the first question, theorem \ref{thm:sewfs} shows that the gap between violations of the two types of inequalities is dependent on the scenario under consideration, and it is possible to construct a scenario, namely, the sequential EWFS, where there is never a gap at all.

Note that this result is not in contradiction with the fact that the LF assumptions are metaphysically weaker than Bell’s. It just emphasises that the distinction between those assumptions is only expressed in some scenarios. In a standard Bell scenario without any friends, for example, the LF assumptions do not lead to inequalities that can be violated by quantum correlations.

There is no reason to believe that the sequential EWFS are the only scenarios that can close the gap between Bell and LF inequalities. It would be of interest for future work to find other EWFS that also exhibit this property. For example, one possible variation of the EWFS that may yield interesting results would be to consider a ``nested'' sort of arrangement, where a friend can have a friend of their own, and this friend can also have a friend, and so on. Experimental realizations for the sequential EWFS, or other EWFS of interest, such as the nested one just described, could be implemented, similarly to the experiments performed in \cite{Bong20, Proietti19}. Finally, we note that that the sequential scheme EWFS of this paper, originally presented in the first author's thesis~\cite{Utreras22}, has also recently been applied to another theorem related to Wigner's friend~\cite{Omrod23} which uses somewhat different assumptions from Local Friendliness.

\textbf{Funding.} This work was supported by grant number FQXi-RFP-CPW-2019 from the Foundational Questions Institute and Fetzer Franklin Fund, a donor advised fund of Silicon Valley Community Foundation, and by the Centre for Quantum Computation and Communication Technology (CQC2T). A.U.-A. acknowledges financial support through Australian Government Research Training Program Scholarships.

\textbf{Acknowledgements.} We acknowledge the traditional owners of the land at Griffith University on which this work was undertaken, the Yuggera and Yugambeh peoples. Avatars in Figs. \ref{fig:ewfs} and \ref{fig:swfs} are adapted from Eucalyp Studio, available under a Creative Commons licence (Attribution 3.0 Unported), \url{https://creativecommons.org/licenses/by/3.0/}, at \url{https://www.iconfinder.com/iconsets/avatar-55}.

\end{document}